\newtheorem{theorem}{Theorem}
\newtheorem{lemma}{Lemma}
\newcommand{\saeedLater}[1]{}
\newcommand{\saeedDissertation}[1]{}
\newcommand{\doc}[1]{}
\newcommand{\tikzfolder}{./tikz-files/}
\title{The Method of Conditional Expectations\\ for Cubic Metric Reduction in OFDM}
\date{\today}
\author{\IEEEauthorblockN{Saeed Afrasiabi-Gorgani, Gerhard Wunder}
\IEEEauthorblockA{Heisenberg Communications and Information Theory Group\\
Free University of Berlin\\
s.afrasiabi@fu-berlin.de, g.wunder@fu-berlin.de}}
\begin{document}

\maketitle

\begin{abstract}
High variations in the OFDM signal envelope  cause nonlinear distortion in  the power amplifier of the transmitter, which is a major drawback. Peak-to-Average Power Ratio (PAPR) and Cubic Metric (CM) are commonly used for quantifying this characteristic of the signal. Despite the reportedly higher  accuracy compared to PAPR, limited research has been done on reduction algorithms for CM.  In this paper, the Method of Conditional Expectations (CE Method) is used to achieve CM reduction by the Sign Selection approach.  Using the CE Method, the amenable mathematical structure of CM is exploited to develop a low complexity algorithm. In addition, guaranteed reduction is analytically proved for every combination of the data symbols.  Simulations show a reduction gain of almost 3~dB in Raw Cubic Metric (RCM) for practically all subcarrier numbers, which is achieved using only half the full rate loss of the Sign Selection approach. 
\end{abstract}

\begin{IEEEkeywords}
Orthogonal Frequency Division Multiplexing (OFDM), Cubic Metric (CM), Sign Selection
\end{IEEEkeywords}

\section{Introduction}

Orthogonal Frequency Division Multiplexing (OFDM) is a well-known multicarrier waveform which has been used in a number of major wireless communication systems. However, a drawback of the OFDM signals is their high dynamic range, which causes nonlinear distortion at the output of the power amplifier. In addition to the performance degradation in the receiver, the nonlinear distortion creates Out-Of-Band (OOB) radiation and can violate the spectral mask. The problem particularly becomes a bottleneck in physical layer design in the mobile terminals due to the low-cost amplifiers and limited battery life.

The problem is commonly formulated as the minimization of a metric which captures the physical phenomenon. The classical metric is the ratio of the peak power to the average power  over consecutive signal segments and is referred to as Peak-to-Average-Power-Ratio (PAPR).   A more recently proposed metric, referred to as Cubic Metric (CM), is employed in the modern communication systems \cite{DahlmanLTE2014} and is reported to be more accurate than PAPR. Several proposals are available in the literature for the CM reduction problem. For instance, Clipping and Filtering is considered in \cite{ZhuDescClip2013}, which is based on clipping the signal in a controlled manner such that OOB radiation is limited. Some distortion-less methods are 
Partial Transmit Sequence (PTS) as used in \cite{ParkPTS2018} and Tone Reservation (TR)  as in \cite{Behravan2011}. In distortion-less methods, some resources are reserved to allow modification of the signal in order to reduce the desired metric. 

Using the sign of the symbols that modulate the subcarriers has been shown to be a promising method in PAPR reduction problem, referred to in this paper as Sign Selection  \cite{Sharif2004constantPMEPR, Afrasiabi2015derandomized,TellamburaGuidedSS2018,Sharif2009sign,tellamburaCrossEntropy2008}. There are $2^N$ possible sign combinations, with $N$ being the number of subcarriers, which makes the minimization problem of exponential complexity. This has motivated research on competing suboptimal solutions. Some proposals with noticeable performance include the application of the method of Conditional Probabilities in \cite{Sharif2004constantPMEPR, Afrasiabi2015derandomized}, a sign selection method guided by clipping noise in \cite{TellamburaGuidedSS2018}, a greedy algorithm in  \cite{Sharif2009sign} and a cross-entropy-based algorithm in \cite{tellamburaCrossEntropy2008}. The CM reduction problem, however, has not been tackled by the Sign Selection approach.

In this paper,  the method of Conditional Expectations (CE Method), which is originally proposal in graph theory \cite{MitzenmacherUpfal2005}, is used to obtain a suboptimal solution to the Sign Selection problem for CM reduction. Briefly, the CE Method introduces an artificial randomness to the signal and uses the conditional expectations to make the sequential sign selection possible. Asymptotic distribution of the modified signal is shown to lay the basis for derivation of simple closed-form expressions for a sequential sign selection rule. The CM reduction performance of the method is then analyzed to obtain an accessible upper bound on the worst-case reduced CM. Finally, simulations are used to verify the analysis and to show significant reduction in CM for a wide range of $N$. Furthermore, it is shown that only $\frac{N}{2}$ sign bits are enough to achieve almost the same reduction performance, which implies a considerably lower rate loss.

\paragraph*{Notation} A random variable $X$ is distinguished from a realization $x$ by using upper and lower case letters, respectively. Vectors are shown by bold-face letters. For a vector $\mathbf{x}$, the notation $x_{m:n}$ is the compact form for $[x_m, x_{m+1},\ldots, x_n]$. The expected value of $Y$ with respect to the random variable $X$ is denoted by $\mathbb{E}_X[Y]$, where the subscript may be omitted, if clear from the context. 

\section{Signal model and Cubic Metric}
\label{sec:pre}

Consider an OFDM scheme with $N$ subcarriers modulated by data symbols generated  independently and equiprobably from the set of constellation points $\mathcal{M}$. Let the random vector $\mathbf{B}\in \mathcal{M}^N$ denote the data symbols. 
The oversampled discrete-time baseband OFDM symbol  is
\begin{equation}
	s(n, \mathbf{B})\!=\!\frac{1}{\sigma_b \sqrt{N}}\sum_{k=0}^{N-1}\!\! B_k e^{j \frac{2\pi}{LN} k n} ,\ n=0,1,\ldots, LN\!-\!1,
	\label{eq:ofdmSymbolDef}
\end{equation}
where the signal is normalized by $\sigma_b^2=\mathbb{E}[|{B_k}|^2]$ in order to have unit power  and  $L>1$ is  the oversampling factor required for reliable calculation of CM \cite{KimCubicMetric2016}.

\textbf{Cubic Metric (CM)}  is based on the energy of the third-order signal which appears at the output of the power amplifier and is the main source of the nonlinear distortion \cite{CMmotorola}. For the OFDM signal $v(t)$, CM is defined as 
\begin{equation*}
	\mathrm{CM}_\mathrm{dB}=\frac{\mathrm{RCM_\mathrm{dB}}-\mathrm{RCM}_\mathrm{ref, dB}}{K_\mathrm{slp}}+K_\mathrm{bw},
\end{equation*}
where the Raw Cubic Metric (RCM) is
\begin{equation}
	\mathrm{RCM}_\mathrm{dB}=20 \log_{10}\left( \mathrm{rms}\!\left[\left(\frac{v(t)}{\mathrm{rms}[v(t)]}\right)^3 \right]\right)
	\label{eqn:RCMdef}
\end{equation}
and $\mathrm{RCM}_\mathrm{ref}$ is obtained similarly for a reference signal. The slope factor $K_\mathrm{slp}$ and the bandwidth scaling factor $K_{bw}$ are obtained from hardware measurements \cite{CMmotorola}. The Root Mean Square (RMS) of a signal, e.g. $\mathrm{rms}[v(t)]$, over a large interval $T\subset \mathbb{R}$ is $\sqrt{\frac{1}{T} \int_T v(t) dt}$.  Note that CM and RCM are scalars obtained from the whole signal. The reduction algorithms, on the other hand, typically operate on individual OFDM symbols. Therefore, RCM of an OFDM symbol is used which is defined below.
 
\textbf{Symbol RCM (SRCM)} for an oversampled discrete-time  baseband OFDM symbol is
\begin{align}
	\eta_{\scriptscriptstyle N}(\mathbf{B})=\frac{1}{LN} \sum_{n=0}^{LN-1} |s(n,\mathbf{B})|^6,
	\label{eqn:SRCM}
\end{align}
where $\mathrm{rms}[v(t)]=1$ is assumed. It is straight-forward to obtain the relation between $v(t)$ and $s(n)$. For instance, refer to \cite[Ch.~14]{Benedetto1999}.

\section{Sign Selection Problem}
\label{sec:signselection}

 In the Sign Selection problem, the sign bit of each data symbol is reserved to form the optimization variables $\mathbf{x}\in\{-1,1\}^N$.  Assume that the constellation is symmetric such that for each point $y\in\mathcal{M}$, the negated value $-y$ is in the set. Let $\mathcal{M'}\subset\mathcal{M}$ be a (non-unique) choice of  $|\mathcal{M}|/2$ points of $\mathcal{M}$ such that if $y\in\mathcal{M}'$,  then $-y\notin\mathcal{M}'$.  A sample choice of $\mathcal{M}'$ for 16-QAM is shown in Fig.~\ref{fig:choiceOfConst}. Then 
\begin{equation*}
	\mathcal{M}^N=\{\mathbf{c}\odot\mathbf{x}:\mathbf{c}\in\mathcal{M}'^N, \mathbf{x}\in\{-1,1\}^N\},
\end{equation*}
where $\odot$ denotes element-wise multiplication of vectors. For each $\mathbf{c}$, the Sign Selection problem suggests a solution $\mathbf{x}^\ast$. Consequently, $\mathbf{c}\odot\mathbf{x}^\ast$ will be the transmitted symbols.  
The optimization problem for a given $\mathbf{c}$ can be represented as
\begin{equation}
	\min_{\mathbf{x}\in\{-1,1\}^N} f(\mathbf{c}\odot\mathbf{x}),
	\label{eqn:optimizationProblem}
\end{equation}
where $f(.)\geq 0$ is the metric, such as PAPR or CM. 

\begin{figure}[t]
	\centering
	\includegraphics[width=0.5\columnwidth]{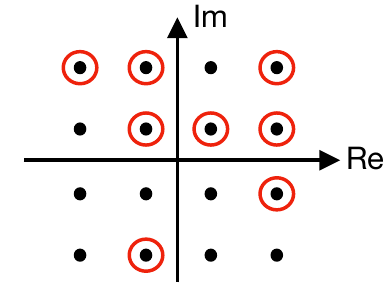}
	\caption{A non-unique choice of $\mathcal{M}'$ from $\mathcal{M}$; here a 16-QAM constellation.}
	\label{fig:choiceOfConst}
	\vspace{-0.4cm}
\end{figure}

\subsection{Receiver processing and Rate loss}

For the Sign Selection scheme, decoding is simply performed by choosing $b\in\mathcal{M}'$ when one of $\pm b\in\mathcal{M}$ is detected. That is, the decoding adds no complexity to the receiver. Consider that using $\mathcal{M}'$ for the symbols with a reserved bit for sign selection incurs a rate loss. If $N_s\leq N$ signs are used in the sign selection by mapping the corresponding data symbols from $\log_2 |\mathcal{M}|-1$ bits to $\mathcal{M}'$ and the remaining $N-N_s$ data symbols from $\log_2 |\mathcal{M}|$ bits to $\mathcal{M}$, the incurred amount of rate loss is 
\begin{align*}
	R= \frac{N_s \log_2 \frac{|\mathcal{M}|}{2}}{N \log_2 |\mathcal{M}|}
	= \frac{N_s}{N} \log_{|\mathcal{M}|} 2,
\end{align*}
which clearly decreases for a larger constellation.

\section{Method of Conditional Expectations}
\label{sec:CEmethod}

The CE Method \cite{spencer10}, \cite{MitzenmacherUpfal2005} is represented here for reduction of an arbitrary metric $f(.)\geq 0$ by sign selection. For a given data vector $\mathbf{c}$, the value of the sign variables will be decided sequentially. Consider the random vector $\mathbf{X}\in\{-1,1\}^N$ with independently and equiprobably distributed elements. At the $j$-th iteration where $x^\ast_{0:j-1}$  are already decided,  $x^\ast_j$ is chosen such that the expectation of $f(\mathbf{c}\odot \mathbf{X})$  conditioned on $X_{0:j-1}=x^\ast_{0:j-1}$ is minimized. Formally, the solution  is obtained by the decision rule
\begin{equation}
		x_j^\ast = \underset{x_j\in\{\pm1\}}{\mathrm{arg\ min}} \ \mathbb{E} [f(\mathbf{c}\odot\mathbf{X})|X_{0:j-1}=x^\ast_{0:j-1},X_j=x_j]
		\label{eqn:CEGeneralRule}
	\end{equation}
for $j=0,1,\ldots,N-1$. Justification of the fact that the above decision rule results in a $\mathbf{x}^\ast$ such that $f(\mathbf{c}\odot \mathbf{x}^\ast)$ is a desirably reduced metric value is partly available form the standard treatment of the method \cite{MitzenmacherUpfal2005} and is represented here for the general metric $f$. It will be elaborated for SRCM in Section~\ref{sec:analysis} and verified via simulation in Section~\ref{sec:perf}. 

 Let
\begin{equation}
	g_j^\pm(\mathbf{c})=\mathbb{E} [f(\mathbf{c}\odot\mathbf{X})|X_{0:j-1}=x^\ast_{0:j-1},X_j=\pm1].
	\label{eqn:gDefinition}
\end{equation}
After the $j$-th sign decision, 
\begin{align*}
	\mathbb{E} &[f(\mathbf{c}\odot\mathbf{X})|X_{0:j} =x^\ast_{0:j}]=\min\  \{g_j^+(\mathbf{c}), g_j^-(\mathbf{c})\}.
\end{align*}
In addition, it holds that
\begin{align*}
	\mathbb{E} &[f(\mathbf{c}\odot\mathbf{X}) | X_{0:j-1}=x^\ast_{0:j-1}] \nonumber \\
	& \quad\quad =g_j^+(\mathbf{c}) \mathbb{P}(X_j=1)  +  g_j^-(\mathbf{c}) \mathbb{P}(X_j=-1) \nonumber \\
	&\quad \quad =\frac{1}{2}(g_j^+(\mathbf{c}) + g_j^-(\mathbf{c})) \nonumber \\
	& \quad \quad \geq \min \{g_j^+(\mathbf{c}), g_j^-(\mathbf{c})\}.
\end{align*} 
Therefore, by each sign decision, we have
\begin{equation*}
	\mathbb{E}[f(\mathbf{c}\odot\!\mathbf{X})|X_{0:j}\!=\!x^\ast_{0:j}]\!\leq\! \mathbb{E}[f(\mathbf{c}\odot\!\mathbf{X})|X_{0:j-1}\!=\!x^\ast_{0:j-1}],
\end{equation*}
which shows a non-increasing sequence of conditional expectations. For a given $\mathbf{c}$, it begins with the \emph{initial expectation} $\mathbb{E}_\mathbf{X}[f(\mathbf{c}\odot\mathbf{X})]$ with complete randomness in $\mathbf{X}$. At last, it ends with $f(\mathbf{c}\odot\mathbf{x}^\ast)=\mathbb{E} [f(\mathbf{c}\odot\mathbf{X})|\mathbf{X}=\mathbf{x}^\ast]$ where no randomness is left and it coincides with the reduced metric value such that~\cite{MitzenmacherUpfal2005}
\begin{align}
	f(\mathbf{c}\odot\mathbf{x}^\ast) \leq \mathbb{E}_\mathbf{X}[f(\mathbf{c}\odot\mathbf{X})].
	\label{eqn:CEguarantee}
\end{align}


\subsection{Distribution of $s(\mathbf{c}\odot \mathbf{Y}^\pm_j)$}

Calculation of the conditional expectations is the main challenge in employing the CE Method.  
For the sake of brevity, let
\begin{equation*}
	\mathbf{Y}^\pm_j\triangleq[x^\ast_0,x^\ast_1,\ldots,x^\ast_{j-1},\pm1,X_{j+1},\ldots,X_{N-1}]^T.
\end{equation*}
The conditional expectations \eqref{eqn:gDefinition} required for the sign decision at iteration $j$ for SRCM, i.e. $\eta_{\scriptscriptstyle N}(.)$, can be written as
\begin{equation}
	g_j^\pm(\mathbf{c})=\frac{1}{LN} \sum_{n=0}^{LN-1} \mathbb{E} [|s(n, \mathbf{c}\!\odot\! \mathbf{Y}^\pm_j)|^6],
	\label{eqn:CMexpCompact}
\end{equation}
which motivates the study of the distribution of $s(n, \mathbf{c}\odot \mathbf{Y}^\pm_j)$.

Recall that the proposed algorithm performs sign selections for a given $\mathbf{c}\in\mathcal{M}'$ by introducing random signs $\mathbf{X}$. That is, subcarriers are modulated by $c_kX_k, k=0,\ldots,N-1$. Consequently the real and imaginary parts for each summand in \eqref{eq:ofdmSymbolDef} are dependent. Consider the centered random variables
\begin{align*}
	\hat{s}_r(n, \mathbf{c}\!\odot\! \mathbf{Y}^\pm_j) &\! =\! s_r(n, \mathbf{c}\!\odot\! \mathbf{Y}^\pm_j) - \mathbb{E}[s_r(n, \mathbf{c}\!\odot\! \mathbf{Y}^\pm_j)], \nonumber \\
	\hat{s}_i(n, \mathbf{c}\!\odot\! \mathbf{Y}^\pm_j) &\! =\! s_i(n, \mathbf{c}\!\odot\! \mathbf{Y}^\pm_j) - \mathbb{E}[s_i(n, \mathbf{c}\!\odot\! \mathbf{Y}^\pm_j)].
\end{align*}
The following lemma determines the covariances of the signal components in the limit, which is a necessary step in proving their joint Gaussian distribution in Theorem~\ref{thm:normality}.
\begin{lemma}
	\label{lem:variances}
	Consider $j=\alpha N$ where $\alpha\leqslant 1$ is a  rational number. For $\mathbf{C}$ uniformly distributed in $\mathcal{M}'^N$, it holds for $n=0,1,\ldots,LN-1$ that
	\begin{align}
		\lim_{N\to\infty} \mathbb{E}_{\mathbf{Y}^\pm_j}[\hat{s}_r(n, \mathbf{C}\!\odot\! \mathbf{Y}^\pm_j) \hat{s}_r(n, \mathbf{C}\!\odot\! \mathbf{Y}^\pm_j)] &= \frac{1-\alpha}{2} \nonumber \\
		\lim_{N\to\infty} \mathbb{E}_{\mathbf{Y}^\pm_j}[\hat{s}_i(n, \mathbf{C}\!\odot\! \mathbf{Y}^\pm_j) \hat{s}_i(n, \mathbf{C}\!\odot\! \mathbf{Y}^\pm_j)] &= \frac{1-\alpha}{2} \nonumber \\
		\lim_{N\to\infty} \mathbb{E}_{\mathbf{Y}^\pm_j}[\hat{s}_r(n, \mathbf{C}\!\odot\! \mathbf{Y}^\pm_j) \hat{s}_i(n, \mathbf{C}\!\odot\! \mathbf{Y}^\pm_j)] &= 0
		\label{eqn:variances}
	\end{align}
with probability one. 
\end{lemma}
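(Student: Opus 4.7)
The plan is to compute each of the three conditional covariances in closed form for fixed $\mathbf{c}$, and then take the limit as $N\to\infty$ via a law-of-large-numbers argument over the data vector $\mathbf{C}$. Writing $c_k=c_{k,r}+jc_{k,i}$ and $\theta_{k,n}=2\pi kn/(LN)$, I would decompose
\begin{equation*}
s_r(n,\mathbf{c}\odot\mathbf{Y}^\pm_j)=\frac{1}{\sigma_b\sqrt{N}}\sum_{k=0}^{N-1} Y^\pm_{j,k}\bigl(c_{k,r}\cos\theta_{k,n}-c_{k,i}\sin\theta_{k,n}\bigr),
\end{equation*}
and analogously for $s_i$. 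Since $Y^\pm_{j,k}$ is deterministic for $k\leq j$ and an independent Rademacher sign $X_k$ for $k>j$, subtracting the conditional mean eliminates the deterministic partial sum and leaves $\hat{s}_r,\hat{s}_i$ as weighted sums indexed only by $k>j$.

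The orthogonality $\mathbb{E}[X_kX_{k'}]=\delta_{kk'}$ then collapses every cross product to a diagonal sum, giving
\begin{equation*}
\mathbb{E}_{\mathbf{Y}^\pm_j}\!\bigl[\hat{s}_r(n)^2\bigr]=\frac{1}{\sigma_b^2 N}\sum_{k=j+1}^{N-1}\bigl(c_{k,r}\cos\theta_{k,n}-c_{k,i}\sin\theta_{k,n}\bigr)^2,
\end{equation*}
together with analogous closed forms for $\mathbb{E}[\hat{s}_i^2]$ and $\mathbb{E}[\hat{s}_r\hat{s}_i]$. To take $\mathbf{C}$-expectations, observe that $c_r^2$, $c_i^2$ and $c_rc_i$ are invariant under $c\mapsto -c$, so averages over $\mathcal{M}'$ coincide with those over $\mathcal{M}$, and the standard QAM symmetries then deliver $\mathbb{E}[C_r^2]=\mathbb{E}[C_i^2]=\sigma_b^2/2$ and $\mathbb{E}[C_rC_i]=0$. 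Using $\sin^2+\cos^2=1$ collapses the trigonometric factors, so the two diagonal $\mathbf{C}$-expectations both equal $(N-j-1)/(2N)$ and the cross expectation vanishes; with $j=\alpha N$ these are precisely the claimed limits $(1-\alpha)/2$ and $0$.

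The last step is to promote these mean statements to almost-sure convergence in $\mathbf{C}$. For every $N$, each quantity of interest is a $1/N$-scaled sum of independent, uniformly bounded random variables (the constellation $\mathcal{M}'$ is finite), so Hoeffding's inequality yields a tail bound of the form $\mathrm{Pr}(|A_N-\mathbb{E}_\mathbf{C}[A_N]|>\epsilon)\leq 2\exp(-cN\epsilon^2)$, which is summable in $N$; Borel--Cantelli then secures the almost-sure limit along the subsequence on which $\alpha N$ is integer-valued, which is exactly where the rationality of $\alpha$ enters. The main obstacle I anticipate is precisely this triangular-array character: the summands depend on $N$ through $\theta_{k,n}=2\pi kn/(LN)$, so classical i.i.d.\ strong laws do not apply directly and must be replaced by the concentration-plus-Borel--Cantelli route described above.
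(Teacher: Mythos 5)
Your proposal is correct, and its overall architecture matches the paper's (omitted) proof sketch: compute the conditional covariances over the random signs in closed form for fixed $\mathbf{c}$, observe that the centering kills the already-decided coordinates $k\leq j$, reduce each covariance to a $1/N$-scaled sum of $N-j-1$ independent bounded functions of the $C_k$, evaluate the $\mathbf{C}$-mean via the constellation symmetries and $\sin^2+\cos^2=1$, and then argue that the random quantity concentrates on that mean. Where you genuinely depart from the paper is the final concentration step, and your version is the stronger one: the paper states that it suffices to show the variances (over $\mathbf{C}$) of the random conditional expectations tend to zero and that this ``implies'' almost-sure convergence, but a variance of order $1/N$ only gives a Chebyshev tail of order $1/(N\epsilon^2)$, which is not summable and hence yields only convergence in probability. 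Your Hoeffding-plus-Borel--Cantelli route supplies the summable tails needed for the ``with probability one'' claim as stated, and your remark that the summands form a triangular array (through $\theta_{k,n}=2\pi kn/(LN)$), so that classical i.i.d.\ strong laws do not apply directly, is exactly the right diagnosis. Two minor points to tidy up: if the almost-sure statement is meant to hold simultaneously for all $n=0,\ldots,LN-1$, you should add a union bound over the $LN$ values of $n$ before invoking Borel--Cantelli (the extra factor of $LN$ leaves the tails summable); and the identities $\mathbb{E}[C_r^2]=\mathbb{E}[C_i^2]=\sigma_b^2/2$, $\mathbb{E}[C_rC_i]=0$ should be flagged as an assumption on the constellation (true for square QAM and the symmetric constellations the paper considers, but not a consequence of the sign-symmetry of $\mathcal{M}$ alone).
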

\begin{proof}
The proof consists of showing that the variances of the  random expected values in  \eqref{eqn:variances} converge to zero. Consequently, it implies that they are almost surely equal to the limit value of their expected values as $N\to\infty$. If $j$ is constant and not growing with $N$, the limit value is identical to the case where no sign decision is made. Therefore, it is not reflected in the lemma. The case of $j=\alpha N$, which is pertinent to the working of the algorithm, alters the limit value and must be accounted for.  The full proof is rather technical and is omitted due to lack of space.  
\end{proof}

The next essential property of the signal samples is the joint Gaussian distribution of their real and imaginary parts, which is shown in the following theorem. Given Lemma~\ref{lem:variances}, a standard procedure involving Cram\'{e}r-Wold Theorem and Lindeberg's Condition  \cite{billingsley1999} can be used to show the joint distribution. The proof is omitted due to the lack of space.

\begin{theorem}
	\label{thm:normality}
	For any given~$\mathbf{c}\in\mathcal{M}'$ and $n=0,1,\ldots,LN-1$,
	\begin{equation}
		\begin{bmatrix}
			\hat{s}_r(n, \mathbf{c}\!\odot\! \mathbf{Y}^\pm_j) \\
			\hat{s}_i(n, \mathbf{c}\!\odot\! \mathbf{Y}^\pm_j)
		\end{bmatrix} 
		\xrightarrow{d} \mathcal{N}(\mathbf{0},\frac{1-\alpha}{2} I).
		\label{eqn:jointNormalS}
	\end{equation}
	where $\mathbf{0}$ is a vector of zeros and $I$ is a $2\times 2$ identity matrix. 
\end{theorem}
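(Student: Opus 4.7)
My plan is to apply the Cram\'{e}r--Wold device to reduce the bivariate statement to a univariate CLT, and then to establish that CLT by checking Lindeberg's condition on a triangular array of bounded independent summands, using Lemma~\ref{lem:variances} to identify the asymptotic variance.

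First I would rewrite the centered sample in a form that exposes its CLT structure. Fix $\mathbf{c}$ and the already-decided signs $x_{0:j-1}^\ast$; since each $X_k$ is zero-mean and independent across $k>j$, centering subtracts precisely the deterministic contributions from indices $k\leq j$, leaving
\[
\hat{s}(n,\mathbf{c}\odot\mathbf{Y}_j^\pm) = \frac{1}{\sigma_b\sqrt{N}} \sum_{k=j+1}^{N-1} X_k \phi_{n,k}, \quad \phi_{n,k} \triangleq c_k\, e^{j \frac{2\pi}{LN} kn}.
\]
The real and imaginary parts are then real-valued sums of the independent Rademacher variables $X_k$ weighted respectively by $\re{\phi_{n,k}}$ and $\im{\phi_{n,k}}$. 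By the Cram\'{e}r--Wold theorem, the claim \eqref{eqn:jointNormalS} reduces to showing that for every $(a,b)\in\mathbb{R}^2$,
\[
T_N \triangleq a\,\hat{s}_r(n,\mathbf{c}\odot\mathbf{Y}_j^\pm) + b\,\hat{s}_i(n,\mathbf{c}\odot\mathbf{Y}_j^\pm) = \sum_{k=j+1}^{N-1} W_{N,k}
\]
converges in distribution to $\mathcal{N}\bigl(0,\tfrac{1-\alpha}{2}(a^2+b^2)\bigr)$, where $W_{N,k} \triangleq \frac{X_k}{\sigma_b\sqrt{N}}\bigl(a\re{\phi_{n,k}} + b\im{\phi_{n,k}}\bigr)$.

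The variance of $T_N$ expands into a linear combination of the three second moments quantified by Lemma~\ref{lem:variances}, so almost surely $\mathrm{Var}(T_N) \to \tfrac{1-\alpha}{2}(a^2+b^2)$, matching the target. The Lindeberg verification is then immediate: because $\mathcal{M}'$ is finite, $|\phi_{n,k}|$ is bounded by $\max_{m\in\mathcal{M}'}|m|$ uniformly in $k$ and $n$, so $|W_{N,k}| = O(1/\sqrt{N})$ uniformly. Thus for any $\epsilon>0$ and $N$ large enough, every $|W_{N,k}|<\epsilon$, so the truncated second-moment sum is identically zero. Lindeberg's CLT then yields the required univariate convergence, and Cram\'{e}r--Wold concludes \eqref{eqn:jointNormalS}.

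The main delicate point, inherited entirely from Lemma~\ref{lem:variances}, is the interplay between the almost-sure statement over $\mathbf{C}$ and the in-distribution convergence over $\mathbf{Y}_j^\pm$. The natural way to handle this is to first restrict attention to a typical realization $\mathbf{c}$ on the probability-one event where the lemma applies, and then to run the Cram\'{e}r--Wold/Lindeberg argument treating $\mathbf{c}$ as deterministic; uniform boundedness of $|c_k|$ guarantees that the Lindeberg estimate does not depend on which typical realization is selected. The CLT machinery itself is standard (cf.~\cite{billingsley1999}); the only nontrivial limit computation is the one already done in Lemma~\ref{lem:variances}.
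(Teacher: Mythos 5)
Your proposal follows exactly the route the paper indicates for Theorem~\ref{thm:normality}: the paper omits the details but states that the result follows from Lemma~\ref{lem:variances} via a standard Cram\'{e}r--Wold plus Lindeberg argument, which is precisely what you carry out, including the correct reduction of $\hat{s}$ to a weighted sum of the remaining Rademacher variables $X_{j+1},\ldots,X_{N-1}$ and the observation that uniform boundedness of the weights makes the Lindeberg condition trivial. Your remark on reconciling the almost-sure statement of Lemma~\ref{lem:variances} over $\mathbf{C}$ with the theorem's ``for any given $\mathbf{c}$'' is a fair and correctly handled point that the paper glosses over.
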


\subsection{Derivation of the Conditional Expectations}
\label{sec:CE-calc}

The expected values in \eqref{eqn:CMexpCompact} are the third moments of $|s(n,\mathbf{c}\odot \mathbf{Y}^\pm_j)|^2$. As shown in Theorem~\ref{thm:normality}, the real and imaginary parts of a signal sample $s(n,\mathbf{c}\odot\mathbf{Y}_j^\pm)$ are independent in the limit due to the zero covariance. Therefore, 
$|s(n,\mathbf{c}\odot \mathbf{Y}^\pm_j)|^2$ can be approximated for large enough $N-j$ as a non-central $\chi^2$-distributed random variable with two degrees of freedom.

Let $\sigma_{s,j}^2=\frac{1}{2}(1-\alpha)$, the value obtained in Lemma~\ref{lem:variances}. Then the real and imaginary parts of
\begin{equation*}
	z(n, \mathbf{c}\!\odot\! \mathbf{Y}^\pm_j)=\sigma_{s,j}^{-1} s(n, \mathbf{c}\!\odot\! \mathbf{Y}^\pm_j)
\end{equation*}
have approximately unit variance for large $N$ with accordingly scaled expected values. Consequently, the third moment can be obtained from the moment generating function of the $\chi^2$ random variable $|z(n, \mathbf{c}\odot \mathbf{Y}^\pm_j)|^2$, for which closed-form expressions are available. That is,
\begin{equation}
	\mathbb{E} [|s(n, \mathbf{c}\!\odot\! \mathbf{Y}^\pm_j)|^6]= \sigma_{s,j}^6 \ \frac{d^3 M^\pm_{j,n}(t)}{dt^3}\Big|_{t=0},
	\label{eqn:expSinZ}
\end{equation}
\saeedDissertation{ The right hand side expectation can be obtained by
\begin{align}
	\mathbb{E} [|z(n, \mathbf{c}\!\odot\! \mathbf{Y}^\pm_j)|^6]= \frac{d^3 M^\pm_{j,n}(t)}{dt^3}\Big|_{t=0},
	\label{eqn:Mderivative}
\end{align}}
with the moment generating function
\begin{align*}
	M^\pm_{j,n}(t)=e^{\lambda^\pm_{j,n} t (1-2t)^{-1}} (1-2t)^{-1} \quad 2t<1,
\end{align*}
where  the non-centrality parameter is
\begin{equation*}
	\lambda^\pm_{j,n}=\sigma_{s,j}^{-2 }\left| \mathbb{E}\left[s(n,\mathbf{c}\!\odot\! \mathbf{Y}^\pm_j)\right]\right| ^2,
\end{equation*}
where the expected values can be constructed cumulatively by adding the contribution of one subcarrier at each iteration.

Obtaining the derivative in \eqref{eqn:expSinZ} and substituting it in \eqref{eqn:CMexpCompact}, we have
\begin{align*}
	g_j^\pm(\mathbf{c})(\mathbf{c})\!= \!\frac{\sigma_{s,j}^6}{LN}  \!\sum_{n=0}^{LN-1}\!\! \left[(\lambda^\pm_{j,n})^3 \!+ \!18(\lambda^\pm_{j,n})^2 + 72\lambda^\pm_{j,n}+ \!48\right]
\end{align*}
which leads to the following closed-form expression for the decision rule \eqref{eqn:CEGeneralRule} for sign selection.

\textbf{Decision rule} \textit{ In the $j$th iteration of the algorithm, the value of $x_j^\ast$ can be decided by
\begin{align}
	x_j^\ast&= -\mathrm{sign}\bigg(\sum_{n=0}^{LN-1} \Big[(\lambda^+_{j,n})^3 + 18(\lambda^+_{j,n})^2 + 72\lambda^+_{j,n}-   \nonumber \\
	&  \quad \quad  \quad \quad  (\lambda^-_{j,n})^3 - 18(\lambda^-_{j,n})^2 - 72\lambda^-_{j,n}\Big] \bigg),
	\label{eqn:finalRule}
\end{align}
where the non-centrality parameters depend on $\mathbf{c}$ and $x^\ast_{0:j-1}$.}

\textbf{Remark} The derivations in this section rely on the Gaussian approximation of $s(n, \mathbf{c}\odot \mathbf{Y}^\pm_j)$ at each iteration of the algorithm, whose accuracy depends on the number remaining random sign variables, i.e. $N-j$. That is, the decision rule in~\eqref{eqn:finalRule} does not implement the CE Method for the last few iterations in principle. Using sample average to calculate the conditional expectations for these last iterations is a trivial solution, which can be done accurately using a high number of realizations of the random sign variables. However, simulations have shown that for varying number of last iterations, the CE Method using an accurate sample average delivers the same performance as using exactly the decision rule in~\eqref{eqn:finalRule}. Therefore, \eqref{eqn:finalRule} is advised for all sign decisions in this work. The method is summarized in Algorithm~\ref{alg:ce-cm}.

\begin{algorithm}
\caption{SRCM reduction by the CE Method.}
	\label{alg:ce-cm}
	\begin{algorithmic}[1]
		\REQUIRE $c_0, \ldots, c_{N_f-1}$: data symbols in $\mathcal{M}$\\  $c_{N_f}, \ldots, c_{N-1}$: data symbols in $\mathcal{M}'$
		\\\COMMENT{Element-wise operations on vectors are assumed in the following calculations.}
		\STATE $N_\circ \gets LN$
		\STATE $\mathbf{x}^\ast \gets [1, 1, \ldots, 1]_{N\times 1}$
		\STATE $\mathbf{n} \gets [0, 1, 2,\ldots, N_\circ-1]$
		\STATE $\mathbf{h} \gets \sum_{j=0}^{N_f-1} c_j \exp(2\pi j \mathbf{n}/N_\circ)$ 
		\FOR {$j = N_f-1$ to $N-1$}
			\STATE $\mathbf{p} \gets \mathbf{h}+  c_j \exp(2\pi j \mathbf{n}/N_\circ)$
			\STATE $\mathbf{m} \gets \mathbf{h}-  c_j \exp(2\pi j \mathbf{n}/N_\circ)$
			\STATE $x_j^\ast \gets -\mathrm{sign}(\mathrm{sum}(|\mathbf{p}|^6\! + 18|\mathbf{p}|^4 +72|\mathbf{p}|^2 -|\mathbf{m}|^6\! - 18|\mathbf{m}|^4\! - 72|\mathbf{m}|^2))$
			\IF{$x_j^\ast=1$}
				\STATE $\mathbf{h} \gets \mathbf{p}$
			\ELSE
				\STATE $\mathbf{h}\gets \mathbf{m}$
			\ENDIF
		\ENDFOR
		\RETURN $\mathbf{x}^\ast$
	\end{algorithmic}
\end{algorithm}

\textbf{Remark} It was observed via simulations that the trajectory of the conditional expectations, as the algorithm performs sign decisions for $x^\ast_0$ to $x^\ast_{N-1}$, suggests that the reduction steps become statistically larger. This motivates pruning the algorithm. That is, a solution to the minimization problem stated in \eqref{eqn:optimizationProblem} with fewer reserved sign bits, i.e. lower rate loss, can be obtained by using the first $N_f$ sign variables to bear data and choosing the remaining sign variables $x_{N_f+1:N-1}$ according to \eqref{eqn:CEGeneralRule}.

\section{Performance analysis}
\label{sec:analysis}

The following theorem shows an upperbound on the reduced SRCM for all data vectors $\mathbf{C}$, i.e. including the worst-case reduced value. 

\begin{theorem}
	\label{thm:upperbound}
	For all $\mathbf{C}\in\mathcal{M}'^N$, the reduced SRCM value in the limit is bounded from above as
	\begin{align}
		\lim_{N\to\infty}\eta_{\scriptscriptstyle N}(\mathbf{C}\odot\mathbf{x}^\ast) \leq 6,
		\label{eq:SRCMupper}
	\end{align}
	where $\mathbf{x}^\ast$ is the solution provided by \eqref{eqn:finalRule} for $\mathbf{C}$.
\end{theorem}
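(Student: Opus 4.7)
The plan is to exploit the CE-method guarantee (10) to reduce Theorem~2 to a moment computation for the fully random initial signal, and then use Theorem~1 to evaluate that moment. Applying (10) with $f=\eta_{\scriptscriptstyle N}$ gives $\eta_{\scriptscriptstyle N}(\mathbf{C}\odot\mathbf{x}^\ast)\le \mathbb{E}_{\mathbf{X}}[\eta_{\scriptscriptstyle N}(\mathbf{C}\odot\mathbf{X})]$ for every $\mathbf{C}$, so it suffices to show that the right-hand side converges to $6$. Using (4) and exchanging the finite sum with the expectation, this further reduces to establishing $\lim_{N\to\infty}\mathbb{E}[|s(n,\mathbf{C}\odot\mathbf{X})|^6]=6$ for each $n$.

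To evaluate the pointwise limit, I would invoke Theorem~1 at $j=0$ (so $\alpha=0$). This is admissible because no sign has been committed and, since $\mathbb{E}[X_k]=0$, the centered and uncentered signals coincide, which makes the hypothesis of Theorem~1 applicable verbatim. Theorem~1 then yields $(s_r(n,\mathbf{C}\odot\mathbf{X}),s_i(n,\mathbf{C}\odot\mathbf{X}))\xrightarrow{d}\mathcal{N}(\mathbf{0},\tfrac{1}{2}I)$. Consequently $|s(n,\mathbf{C}\odot\mathbf{X})|^2=s_r^2+s_i^2$ converges in distribution to an exponential random variable of mean one, whose third moment equals $3!=6$. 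Averaging the identical pointwise limits over $n=0,\ldots,LN-1$ preserves the value $6$, which gives the claimed bound once the moment exchange is justified.

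The main obstacle I anticipate is promoting this convergence in distribution to convergence of the sixth moment, which requires uniform integrability of the family $\{|s(n,\mathbf{C}\odot\mathbf{X})|^6\}_N$. Since $s(n,\mathbf{C}\odot\mathbf{X})$ is a normalized sum of $N$ independent, centered, uniformly bounded summands (boundedness inherited from the finite constellation $\mathcal{M}$), a Rosenthal-type inequality readily yields $\sup_N \mathbb{E}[|s(n,\mathbf{C}\odot\mathbf{X})|^{6+\delta}]<\infty$ for some $\delta>0$, which is sufficient. A secondary subtlety is that Lemma~1 provides only an almost-sure variance statement for random $\mathbf{C}$, so the reading of ``for all $\mathbf{C}\in\mathcal{M}'^N$'' in Theorem~2 should be understood in the asymptotic sense of sequences whose empirical second moment $\frac{1}{N}\sum_k|c_k|^2$ approaches $\sigma_b^2$, which is consistent with the normalization in (1) and the CLT-based scaling that underlies Theorem~1.
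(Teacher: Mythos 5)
Your proposal follows essentially the same route as the paper: apply the CE guarantee \eqref{eqn:CEguarantee} to reduce the claim to the limit of the initial expectation $\mathbb{E}_\mathbf{X}[\eta_{\scriptscriptstyle N}(\mathbf{C}\odot\mathbf{X})]$, then use the $\alpha=0$ case of Theorem~\ref{thm:normality} to identify $|s(n,\cdot)|^2$ as asymptotically exponential (equivalently, $|s(n,\cdot)|$ as Rayleigh) with third moment $3!=6$ --- exactly the argument the paper delegates to its reference for \eqref{eqn:RCMlimit}. Your explicit treatment of uniform integrability via a Rosenthal-type bound, and your remark on the almost-sure reading of Lemma~\ref{lem:variances} versus the ``for all $\mathbf{C}$'' phrasing, supply details the paper leaves implicit but do not change the approach.
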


 \begin{figure}[t]
	\centering
	\includegraphics[width=\columnwidth]{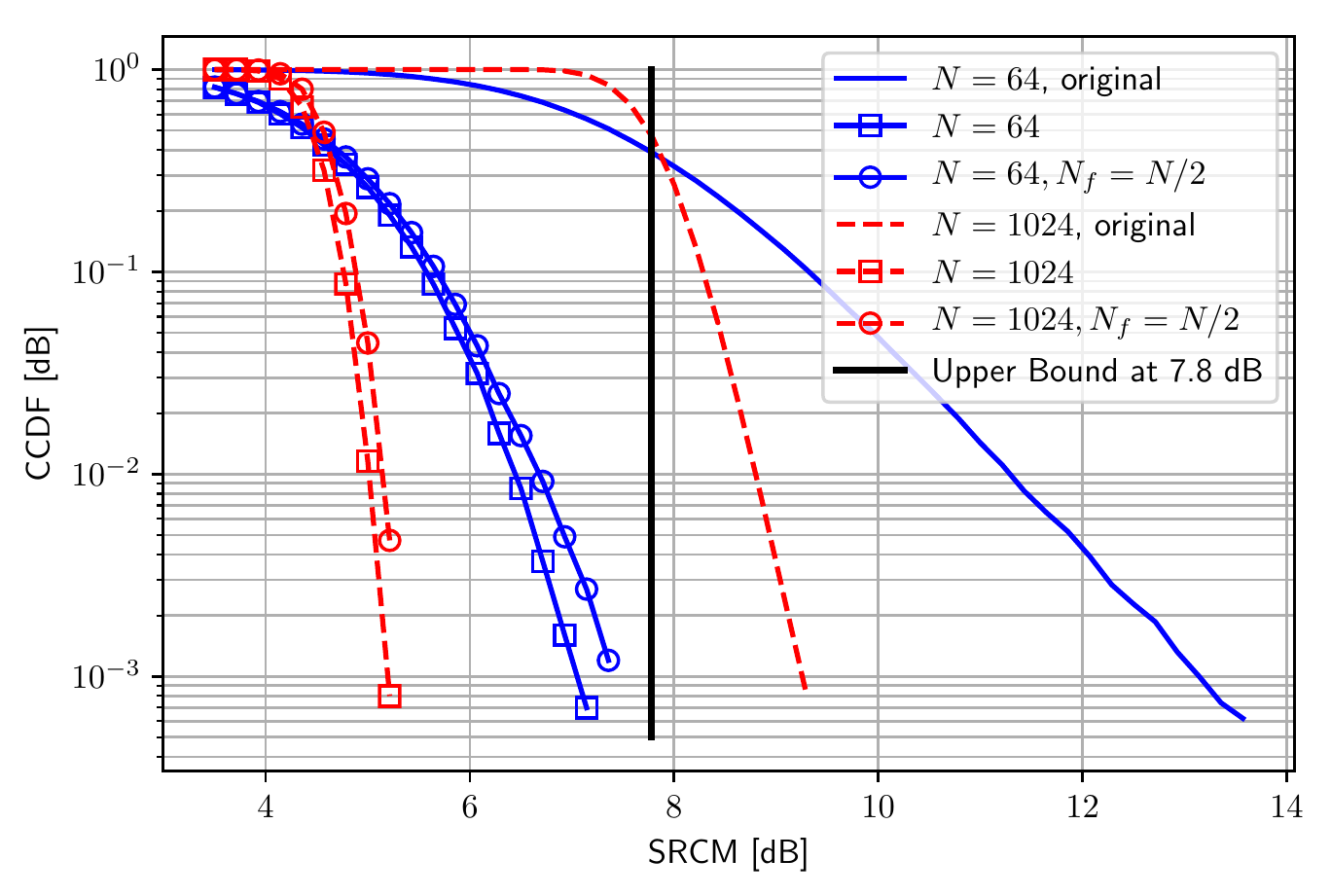}
	\caption{SRCM Reduction performance, including the analytical upperbound.}
	\label{fig:CMreduction}
	\vspace{-0.0cm}
\end{figure}

\begin{proof}
	\saeedLater{we don't need this, we can build the result on the fixed c case in the same manner, reyleigh and ...} As stated in~\eqref{eqn:CEguarantee}, the CE Method guarantees that
	\begin{equation}
		\eta_{\scriptscriptstyle N}(\mathbf{C}\odot\mathbf{x}^\ast) \leq \mathbb{E}_\mathbf{X} [\eta_{\scriptscriptstyle N}(\mathbf{C}\odot\mathbf{X})], \quad \text{for any } N. 
		\label{eq:CEguaranteeCM}
	\end{equation}
In addition, from Theorem~\ref{thm:normality}, we have
	\begin{align*}
		\begin{bmatrix}
			s_r(n, \mathbf{c}\!\odot\! \mathbf{X}) \\
			s_i(n, \mathbf{c}\!\odot\! \mathbf{X})
		\end{bmatrix} 
		\xrightarrow{d} \mathcal{N}(\mathbf{0},\frac{1}{2} I).
	\end{align*}
That is, the distribution of $s(n,\mathbf{c}\odot \mathbf{X})$ in the limit is the same as that of $s(n,\mathbf{B})$, where $\mathbf{B}$  is randomly distributed in $\mathcal{M}$. Therefore, the right hand side of~\eqref{eq:CEguaranteeCM} as $N\to\infty$ is
	\begin{equation}
		\lim_{N\to\infty} \mathbb{E}_\mathbf{X} [\eta_{\scriptscriptstyle N}(\mathbf{C}\odot\mathbf{X})] = \lim_{N\to\infty} \mathbb{E} [\eta_{\scriptscriptstyle N}(\mathbf{B})].
		\label{eqn:Elim}
	\end{equation}
	The distribution of $\eta_{\scriptscriptstyle N}(\mathbf{B})$ is studied in \cite{KimCubicMetric2016}, where it is viewed as the sample average of $|s(n,\mathbf{B})|^6$  with a vanishing variance as $N$ grows. A power of a Rayleigh random variable, here $|s(n,\mathbf{B})|$ as $N\to\infty$, has Weibull distribution with known closed-form expression for its expected value. Consequently, it can be shown that \cite{KimCubicMetric2016}
\begin{equation}
	\lim_{N\to\infty}\mathbb{E}[\eta_{\scriptscriptstyle N}(\mathbf{B})] = 6.
	\label{eqn:RCMlimit}
\end{equation}
Considering \eqref{eq:CEguaranteeCM}, \eqref{eqn:Elim} and that an inequality between two sequences is preserved in their limits, we have
\begin{align*}
		\lim_{N\to\infty}\eta_{\scriptscriptstyle N}(\mathbf{C}\odot\mathbf{x}^\ast) \leq \lim_{N\to\infty}\mathbb{E}[\eta_{\scriptscriptstyle N}(\mathbf{B})],
	\end{align*}
which completes the proof.
\end{proof}

\section{Simulation and Results}
\label{sec:perf}

The reduction in SRCM is shown in Fig.~\ref{fig:CMreduction} for $N=64$ and 1024 to cover a wide range of subcarrier numbers. The constellation size affects the performance but only slightly. Therefore, the simulation results are depicted only for 16-QAM. As shown in the figure, the performance of the pruned algorithm with $N_f=\frac{N}{2}$, i.e. using the second half of sign bits, is only slightly degraded compared to the $N_f=0$ case. It implies that the full rate loss  of the Sign Selection problem, i.e. $\log_{|M|} 2=0.25$, can be reduced to $\frac{1}{2}\log_{|M|} 2=0.125$ with negligible degradation in the performance. The analytical upperbound of Theorem~\ref{thm:upperbound} is as well included in Fig.~\ref{fig:CMreduction}, which confirms the analysis for large $N$, as well as the approximation of $\mathbb{E} [\eta_{\scriptscriptstyle N}(\mathbf{B})], \mathbf{B}\in\mathcal{M}^N$ by the constant $10\log_{10} 6 = 7.8 \text{ dB}$. 

As shown in Fig.~\ref{fig:CMreduction}, RCM is reduced from 7.7~dB to 4.5~dB for both $N=64$ and 1024. That is, a surprising result of nearly 3.2~dB reduction practically regardless of $N$. For $N=512$, for which $K_\mathrm{slp}$ and $K_\mathrm{bw}$ were reported in \cite{CMmotorola}, the CM is reduced to 2.87~dB. The available values are presented in Table~\ref{tbl:CMperf}. The reduction performance is compared to the well-known method Selected Mapping (SLM) \cite{543811}. The result is shown in Fig.~\ref{fig:CMreduction2} for the relatively large number of candidate representations $S=100$, where it can be seen that the performance of the proposed algorithm is significantly better than SLM for $N=1024$, but almost the same for $N=64$. That is, the SLM method cannot maintain its performance as $N$ increases. 

\begin{figure}[t]
	\centering
	\includegraphics[width=\columnwidth]{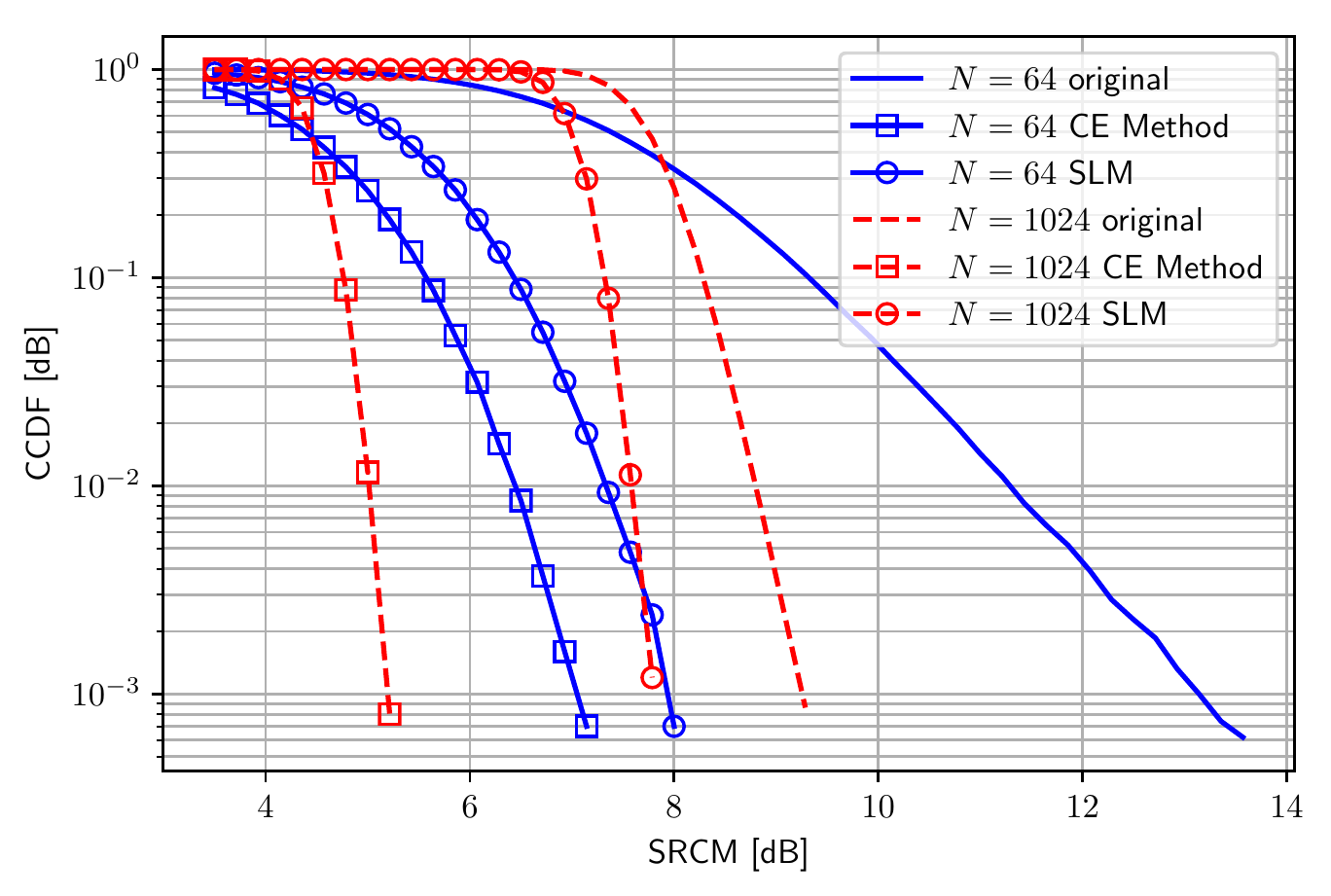}
	\caption{Comparison of the CE Method with SLM with $S=100$.}
	\label{fig:CMreduction2}
	\vspace{-0.4cm}
\end{figure}

For each iteration, the expected value of the signal is constructed cumulatively by adding contribution of each subcarrier at the end of each iteration, which requires $LN$ complex multiplications. Let $\mathrm{D}_1$ be the constant number of complex multiplications required for calculations on each signal sample. Consequently, we have $\mathrm{D}=L(\mathrm{D}_1+1) N^2$ complex multiplications required per OFDM symbol. This is indeed a tentative discussion on the complexity as calculating the constant $\mathrm{D}_1$ is a hardware-related matter.

\section{Conclusion}

The CE Method was used to obtain a suboptimal solution to the Sign Selection problem for SRCM, and eventually CM, reduction. The study of  the OFDM signal samples  under the specific changes imposed by the Sign Selection problem showed that their distribution depends on the given data vector in a tractable manner. Thanks to this observation, the CE Method made it possible to exploit the mathematical structure of SRCM to derive a closed-form decision rule for the sign selections. In addition, an accessible worst-case upperbound on the reduced SRCM was derived. Regarding the reduction performance, two remarkable characteristics were observed from the simulation results. Firstly, the reduced RCM is almost constant for a wide range of $N$, for which the possibility of existence of an analytical justification motivates further research. Secondly, the impact of the individual sign decisions varies by the sign index, such that nearly the same performance is achieved by the using only half of the sign variables. Further work is required to gain insight particularly about the effect of the choice of the subset of the sign variables and the order of decisions.

\begin{table}[t]
	\centering
	\caption{RCM Reduction Performance.}
	\begin{tabular}{|l|c|c|c|c|c|}
		\hline 
		$N$ & orig. RCM & orig. CM & reduced RCM  & reduced CM \\
		\hline
		64 & 7.7 dB& -& 4.5 dB& - \\
		512 & 7.8 dB& 4.8 dB&4.5 dB & 2.87 dB\\
		1024 & 7.8 dB& - & 4.5 dB& -\\
		\hline
	\end{tabular}
	\label{tbl:CMperf}
	\vspace{-0.4cm}
\end{table}

\section*{Acknowledgment}
This work was supported by the German Research Foundation (DFG) under the grant WU 598/3-1.

\bibliographystyle{IEEEtran}
\bibliography{paper}

\end{document}